\newcommand{\tr}{\mbox{Tr}}
\date{\today}
\begin{document}
%\title{A suitable generalized definition of Lovasz number for quantum noncontextuality}

\title{On realizing Lov\'asz-optimum orthogonal representation in the real Hilbert space}

\author{Zhen-Peng Xu}
 \affiliation{Theoretical Physics Division, Chern Institute of Mathematics, Nankai University,
 Tianjin 300071, People's Republic of China}

\author{Jing-Ling Chen}
\email{chenjl@nankai.edu.cn}
 \affiliation{Theoretical Physics Division, Chern Institute of Mathematics, Nankai University,
 Tianjin 300071, People's Republic of China}
 \affiliation{Centre for Quantum Technologies, National University of Singapore,
 3 Science Drive 2, Singapore 117543}

\begin{abstract}
Quantum contextuality is usually revealed by the non-contextual inequality, which can always be associated with an exclusivity graph. The quantum upper bound of the inequality is nothing but the Lov\'asz number of the graph. In this work, we show that if there is a Lov\'asz-optimum orthogonal representation realized in the $d$-dimensional complex Hilbert space, then there always exists a corresponding Lov\'asz-optimum orthogonal representation in the $(2d-1)$-dimensional real Hilbert space. This in turn completes the proof that the Lov\'asz-optimum orthogonal representation for any exclusivity graph can always be realized in the real Hilbert space of suitable dimension.
\end{abstract}

 \pacs{03.65.Ud,
%Entanglement and quantum nonlocality
%(e.g. EPR paradox, Bell's inequalities, GHZ states, etc.)
03.67.Mn,
%Entanglement production, characterization, and manipulation
42.50.Xa}
%Optical tests of quantum theory

\maketitle

\section{Introduction}
Quantum contextuality, which is a fundamental concept in quantum information theory, was independently discovered by Bell \cite{bell66}, Kochen and Specker (KS) \cite{KS}. Contextuality is usually revealed by the non-contextual inequality, quantum-mechanical violation of which implies the nonexistence of the non-contextual hidden variable models. Several important applications of contextuality have recently been found in the certification of random number \cite{deng} as well as the speeding up of quantum algorithms \cite{Howard}.

Graph theory has had wide applications in information theory. Very recently, Cabello, Severini and Winter (CSW) have introduced a general graph-theoretic approach for studying contextuality \cite{CSW14}, this allows to show that quantum contextuality is closely related to the Lov\'asz number \cite{Lovasz79}, an important parameter used in optimization and information theory. For a given noncontextuality inequality
\begin{eqnarray}\label{inequ1}
S = \sum_i w_i\langle P_i \rangle \leq \alpha
\end{eqnarray}
with some exclusivity relation, it can always be represented by an exclusivity  graph. The concrete way is as follows. There is an exclusivity relation between two rank-$1$ projective measurements $P_i$ and $P_j$ if they are orthogonal to each other. There is an edge $e_{ij}$ between two vertices $i,j$ if and only if $P_i,P_j$ are exclusive. The set of all the exclusivity relations of all $P_i$'s is said to be the exclusivity relation of the  noncontextual inequality. In addition, one can associate each $P_i$ with a weight $w_i$ to the $i$-th vertex. In this way, the constructed graph $G=(V,E,W)$ is called as the exclusivity graph of the noncontextual inequality, where $V$ is the set of vertices, $E$ is the set of edges and $W$ is the set of weights. And as shown in \cite{CSW14}, the classical bound of $S$ is just the independence number $\alpha$ of the graph $G$.

It is interesting to study the maximal quantum violation $S^{\rm max}$ for the inequality (\ref{inequ1}) as well as the optimal representation for projectors $P_i$'s. Remarkably Ref. \cite{CSW14} has pointed out that $S^{\rm max}$ is nothing but the Lov\'asz number $\vartheta$ of the graph $G$ \cite{Lovasz79}. Correspondingly, the optimal representation for the projectors is often called as the Lov\'asz-optimum orthogonal representation (LOOR). Let us take the  Klyachko-Can-Binicio\u{g}lu-Shumovsky (KCBS) inequality \cite{kcbs} as an example. The KCBS inequality is the simplest noncontextual inequality for the three-dimensional system, in the sense that it requires the minimal number of projective measurements.

The KCBS inequality is given by
\begin{equation}
S_{\rm KCBS} = \sum_{j=1}^5 \langle P_j \rangle \leq 2,
\end{equation}
where $\langle P_j \rangle= {\rm Tr}[\rho P_j]$, $\rho$ is quantum state, and $P_j$'s
are rank-1 projective measurements with exclusivity relations: $P_jP_{j+1}=0$ ($j=1,2,3,4$) and $P_5P_1=0$. The exclusivity graph for the KCBS inequality is a pentagon graph, in which the weights of all vertices are equal to 1 (see Fig. \ref{fig1}).

\begin{figure}[!h]
\includegraphics[width=5cm]{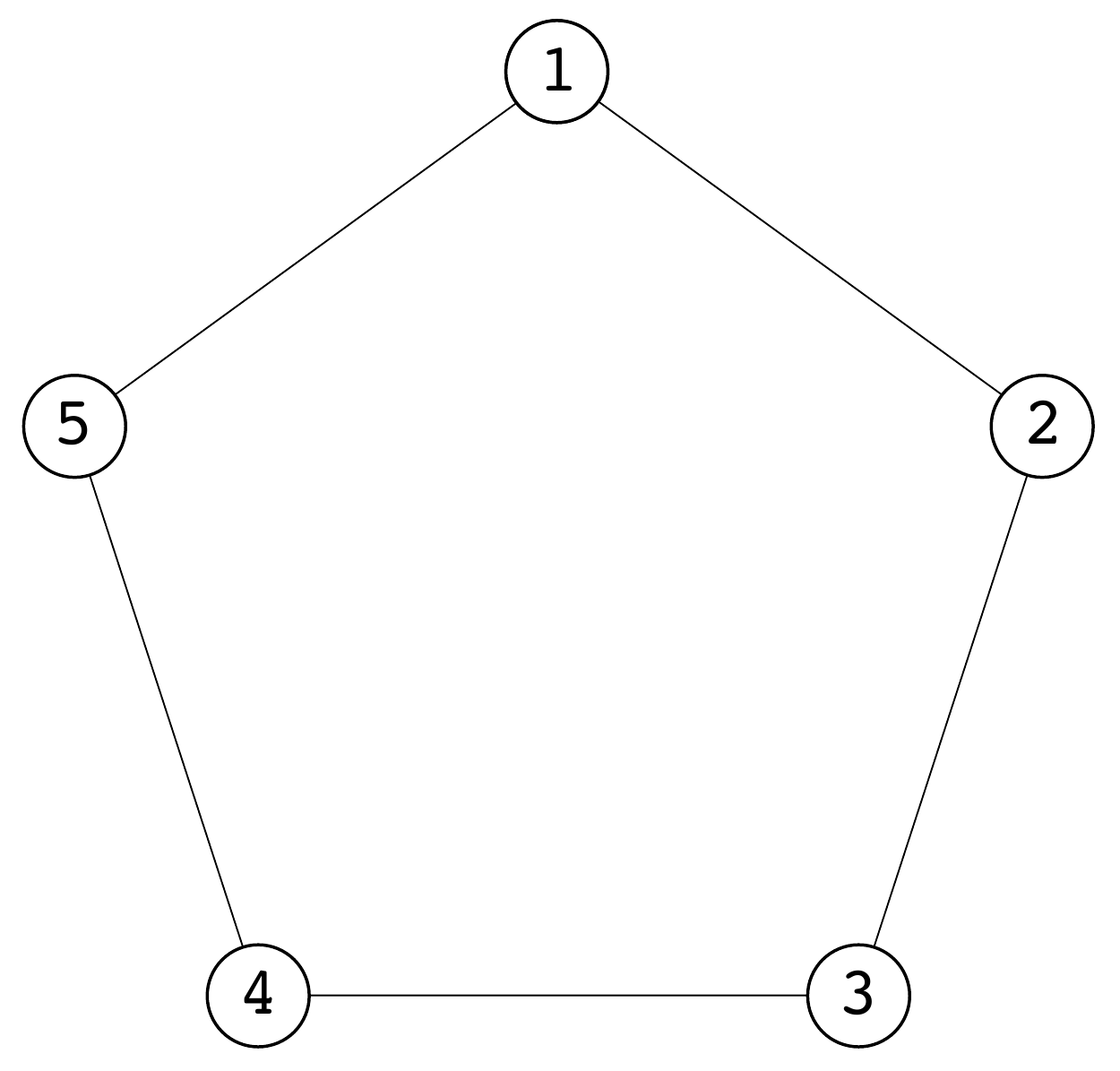}
\caption{The exclusivity graph of the KCBS inequality. }\label{fig1}
\end{figure}
It has been known that the maximal quantum violation is $S_{\rm KCBS}^{\rm max}=\vartheta=\sqrt{5}$, and the corresponding Lov\'asz-optimum orthogonal representation can be realized in the three-dimensional real Hilbert space as

\begin{eqnarray}\label{kcbslovasz1}
\rho&=&|\psi\rangle\langle\psi|,\nonumber\\
P_j&=& |v_j\rangle\langle v_j|, \;\;\; (j=1,2,3,4),
\end{eqnarray}
where
\begin{eqnarray}\label{kcbslovasz}
|\psi\rangle&=& (1,0,0)^T, \nonumber\\
|v_j\rangle&=&\tau\; (1,0,0)^T+\sqrt{1-\tau^2}\; (0,\cos \varphi_j,\sin \varphi_j)^T,\nonumber\\
\tau &=&\sqrt{\frac{1}{\sqrt{5}}},\;\; \varphi_j= \frac{2\pi}{5} (2 j-1),
\end{eqnarray}
and ``$T$" means transpose of a matrix.

For a general exclusivity graph of noncontextual inequality, the original Lov\'asz number is defined as
\begin{equation}
\vartheta(G) = \max \sum_{j\in V} w_j (\langle \psi|v_j\rangle)^2,
\end{equation}
where $|\psi\rangle$ and $|v_j\rangle, j\in V$ run over all possible \emph{real} unit vectors such that $\langle v_i|v_j\rangle=0$ if vertices $i,j$ are connected. Note that  in quantum theory (QT) $S^{\rm max}$ can be always obtained for a pure state $|\psi\rangle$ since the noncontextual inequality is linear, and each $P_j$ can be written as a form of $|v_j\rangle\langle v_j|$. Then the definition of Lov\'asz number directly implies that the upper bound of $S$ in quantum case is just $\vartheta(G)$ if $S^{\rm max}$ is always obtained for a set of projectors $P_j$ in a real Hilbert space of suitable dimension.

Quantum theory and quantum information theory are based on the Hilbert space. It has been claimed that ``\textit{...Taken into account that the maximum value of $S$ in QT is always obtained for a quantum pure state $|\psi\rangle$ and a set of projectors $\Pi_i$ in a real Hilbert space of suitable dimension}" \cite{CSW14}, namely, the Lov\'asz-optimum orthogonal representation can be always realized in the real Hilbert space. To our knowledge, a detailed proof for the claim has not been given in the literature. The purpose of this paper to provide such a proof. 

 The paper is organized as follows. In Sec. II,  we study the exclusivity graph originated from the work of Bengtssona, Blanchfielda, and Cabello (BBC), the BBC-21-Ray \cite{bbc21ray}, in which the vectors $|v_j\rangle$'s have been given by the complex unit vectors. This in turn provides the first example for a nontrivial realization of the LOOR in the \emph{complex} Hilbert space, and also arises immediately a natural question: for BBC-21-Ray, can one realize its LOOR in the real Hilbert space of suitable dimension (as the claim mentioned above)? The answer is positive. In Sec. III, we discuss the relation between $\vartheta_c(G)$ and $\vartheta(G)$, which are the Lov\'asz numbers in terms of complex and real unit vectors, respectively. In Sec. IV and Sec. V, we give two procedures to construct a real LOOR from a complex one. The first procedure is in the operator perspective while the second one is in the vector perspective. Conclusion is made in the last section.

 %we aim at proving that claim is true. Firstly, we define the complex version of Lovasz number and prove that it always equals to the real one. Secondly, we give two procedures to construct a real optimal solution from a complex one. One procedure is in operator perspective while the other one is in vector perspective.

\section{BBC-21-ray and its LOOR }
The noncontextual inequality of BBC-21-Ray is given by  \cite{bbc21ray}
\begin{eqnarray}\label{bbcsic}
S_{\rm BBC} = 3\sum_{j=1}^9 \langle P_j \rangle + 5\sum_{j=10}^{21} \langle P_j\rangle \leq 27
\end{eqnarray}
where $P_j$'s satisfy the exclusivity relation as shown in its exclusivity graph (see Fig.\ref{bbc21rayfig}). The 21 complex unit vectors $|v_j\rangle$'s are as follows:
\begin{eqnarray}\label{vbbc}
&\frac{1}{\sqrt{2}}(0,1,-1)^T,\frac{1}{\sqrt{2}}(1,0,-1)^T,\frac{1}{\sqrt{2}}(1,-1,0)^T,\nonumber\\
&\frac{1}{\sqrt{2}}(0,1,-e^{-\frac{2 i \pi}{3}})^T,\frac{1}{\sqrt{2}}(1,0,-e^{-\frac{2 i \pi}{3}})^T,\frac{1}{\sqrt{2}}(1,-e^{-\frac{2 i \pi}{3}},0)^T,\nonumber\\
&\frac{1}{\sqrt{2}}(0,1,-e^{\frac{2 i \pi }{3}})^T,\frac{1}{\sqrt{2}}(1,0,-e^{\frac{2 i \pi }{3}})^T,\frac{1}{\sqrt{2}}(1,-e^{\frac{2 i \pi }{3}},0)^T,\nonumber\\
&(1,0,0)^T,(0,1,0)^T,(0,0,1)^T,(1,1,1)^T/\sqrt{3},\nonumber\\
&\frac{1}{\sqrt{3}}(1,1,e^{\frac{2 i \pi }{3}})^T,\frac{1}{\sqrt{3}}(1,1,e^{-\frac{2 i \pi}{3}})^T,\frac{1}{\sqrt{3}}(1,e^{-\frac{2 i \pi}{3}},1)^T,\nonumber\\
&\frac{1}{\sqrt{3}}(1,e^{-\frac{2 i \pi}{3}},e^{\frac{2 i \pi }{3}})^T,\frac{1}{\sqrt{3}}(1,e^{-\frac{2 i \pi}{3}},e^{-\frac{2 i \pi}{3}})^T,\frac{1}{\sqrt{3}}(1,e^{\frac{2 i \pi }{3}},1)^T,\nonumber\\
&\frac{1}{\sqrt{3}}(1,e^{\frac{2 i \pi }{3}},e^{\frac{2 i \pi }{3}})^T,\frac{1}{\sqrt{3}}(1,e^{\frac{2 i \pi }{3}},e^{-\frac{2 i \pi}{3}})^T.
\end{eqnarray}

\begin{figure}
\centering
\includegraphics[width=7cm]{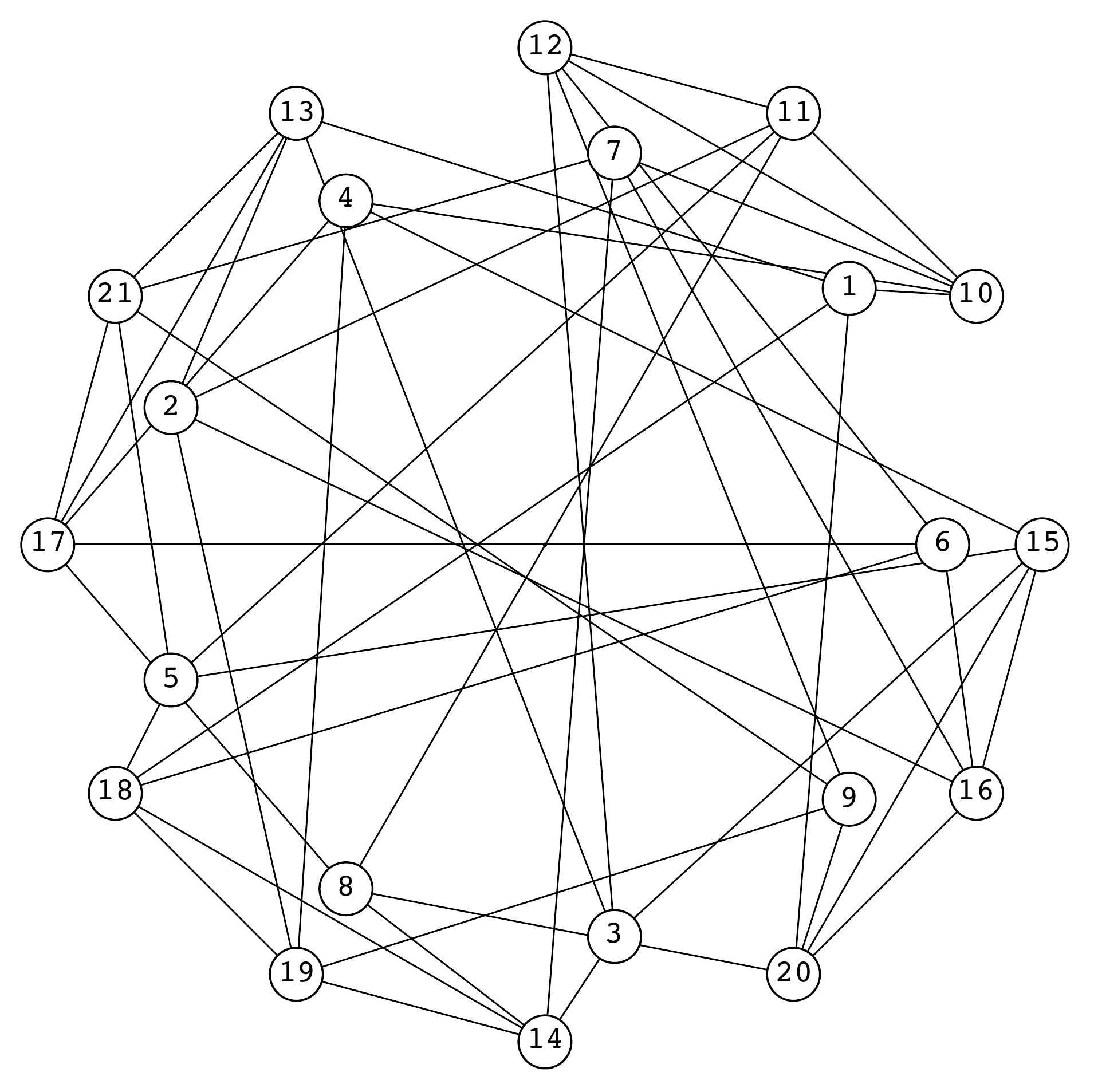}
\caption{The exclusivity graph of BBC-21-ray.}
\label{bbc21rayfig}
\end{figure}

Originally, BBC-21-Ray was proposed to develop a state-independent noncontextuality (SIC) inequality for a qutrit (a three-dimensional quantum system).
This inequality holds for any noncontextual theory, while the upper bound of $S$ is 
\begin{eqnarray}\label{bbclovasz}
S_{\rm BBC}^{\rm max}= 29
\end{eqnarray}
holds for any state of a qutrit. In addition, the  Lov\'asz number of the graph can be computed as $\vartheta(G) =29$ by using the semidefinite programming, thus $S_{\rm BBC}^{\rm max}=\vartheta(G)= 29$.

Let us generalize a little bit the Lov\'asz number in terms of complex unit vectors as: 

\begin{equation}
\vartheta_c(G) = \max \sum_{j\in V} w_j |\langle \psi|v_j\rangle|^2,
\end{equation}
where $|\psi\rangle$ and $|v_j\rangle, j\in V$ run over all possible \emph{complex} unit vectors such that $\langle v_i|v_j\rangle=0$ if vertices $i, j$ are connected. By choosing 
$\rho = |\psi\rangle\langle\psi|$, $|\psi\rangle =(1,0,0)^T$, and the complex vectors $|v_j\rangle$'s in Eq. (\ref{vbbc}), one immediately has 
\begin{equation}
S_{\rm BBC}^{\rm max}=\vartheta_c(G)= \vartheta(G)= 29,
\end{equation}
and automatically obtains a complex realization of LOOR in the Hilbert space.

Here, we would like to address that, the BBC-21-Ray graph indeed can have a real realization of LOOR in the Hilbert space by performing the procedures in Sec. III or Sec. IV. 
The real LOOR can be realized in $2\times 3-1=5$ dimension. The result is $|\psi\rangle = (1,0,0,0,0)^T$ and $|v_j\rangle$'s are as follows

\begin{eqnarray}\label{21real}
&\left(0,\frac{1}{\sqrt{2}},\frac{-1}{\sqrt{2}},0,0\right)^T,\left(\frac{1}{\sqrt{2}},0,\frac{-1}{\sqrt{2}},0,0\right)^T,\left(\frac{1}{\sqrt{2}},\frac{-1}{\sqrt{2}},0,0,0\right)^T,\nonumber\\
&\left(0,\frac{1}{\sqrt{2}},\frac{1}{2 \sqrt{2}},0,\frac{\sqrt{3}}{2\sqrt{2}}\right)^T,\left(\frac{1}{\sqrt{2}},0,\frac{1}{2 \sqrt{2}},0,\frac{\sqrt{3}}{2\sqrt{2}}\right)^T,\nonumber\\
&\left(\frac{1}{\sqrt{2}},\frac{1}{2 \sqrt{2}},0,\frac{\sqrt{3}}{2\sqrt{2}},0\right)^T,\left(0,\frac{1}{\sqrt{2}},\frac{1}{2 \sqrt{2}},0,\frac{-\sqrt{3}}{2\sqrt{2}}\right)^T,\nonumber\\
&\left(\frac{1}{\sqrt{2}},0,\frac{1}{2 \sqrt{2}},0,\frac{-\sqrt{3}}{2\sqrt{2}}\right)^T,\left(\frac{1}{\sqrt{2}},\frac{1}{2 \sqrt{2}},0,\frac{-\sqrt{3}}{2\sqrt{2}},0\right)^T,\nonumber\\
&(1,0,0,0,0)^T,(0,1,0,0,0)^T,(0,0,1,0,0)^T,\nonumber\\
&\left(\frac{1}{\sqrt{3}},\frac{1}{\sqrt{3}},\frac{1}{\sqrt{3}},0,0\right)^T,\left(\frac{1}{\sqrt{3}},\frac{1}{\sqrt{3}},\frac{-1}{2 \sqrt{3}},0,\frac{1}{2}\right)^T,\nonumber\\
&\left(\frac{1}{\sqrt{3}},\frac{1}{\sqrt{3}},\frac{-1}{2 \sqrt{3}},0,\frac{-1}{2}\right)^T,\left(\frac{1}{\sqrt{3}},\frac{-1}{2 \sqrt{3}},\frac{1}{\sqrt{3}},\frac{-1}{2},0\right)^T,\nonumber\\
&\left(\frac{1}{\sqrt{3}},\frac{-1}{2 \sqrt{3}},\frac{-1}{2 \sqrt{3}},\frac{-1}{2},\frac{1}{2}\right)^T,\left(\frac{1}{\sqrt{3}},\frac{-1}{2 \sqrt{3}},\frac{-1}{2 \sqrt{3}},\frac{-1}{2},\frac{-1}{2}\right)^T,\nonumber\\
&\left(\frac{1}{\sqrt{3}},\frac{-1}{2 \sqrt{3}},\frac{1}{\sqrt{3}},\frac{1}{2},0\right)^T,\left(\frac{1}{\sqrt{3}},\frac{-1}{2 \sqrt{3}},\frac{-1}{2 \sqrt{3}},\frac{1}{2},\frac{1}{2}\right)^T,\nonumber\\
&\left(\frac{1}{\sqrt{3}},\frac{-1}{2 \sqrt{3}},\frac{-1}{2 \sqrt{3}},\frac{1}{2},\frac{-1}{2}\right)^T.
\end{eqnarray}
Then
\begin{eqnarray}\label{matrix}
&&3\sum_{j=1}^9 |v_j\rangle\langle v_j| + 5\sum_{j=10}^{21} |v_j\rangle\langle v_j|\nonumber\\
&&=
\begin{bmatrix}
 29 & 0 & 0 & 0 & 0 \\
 0 & \frac{77}{4} & 0 & 0 & 0 \\
 0 & 0 & 17 & 0 & 0 \\
 0 & 0 & 0 & \frac{39}{4} & 0 \\
 0 & 0 & 0 & 0 & 12 \\
\end{bmatrix},
\end{eqnarray}
this yields directly $\vartheta(G)= 29$.

\emph{Remark 1.}--- The 21 complex unit vectors $|v_j\rangle$'s in Eq. (\ref{vbbc}) cannot be rotated to 21 real unit vectors simultaneously by a general unitary transformation. This implies that the realization of the complex vectors together with the complex LOOR are nontrivial.

\emph{Remark 2.}--- The maximal eigenvalue of the matrix in the right-hand side of Eq. (\ref{matrix}) is still $29$ implies the set of $|\psi\rangle$ and $|v_j\rangle$'s is indeed a LOOR in the real Hilbert space. However, the other eigenvalues are less than $29$ implies that the inequality (\ref{bbcsic}) is no longer an SIC inequality in $5$-dimensional Hilbert space. Thus, the complex Hilbert space is still needed if we want to keep some special properties of the inequality, like state-independent noncontextuality \cite{Yu-Oh,Cabello5,Cabello1,XCS}.

\section{$\vartheta_c(G)$ versus $\vartheta(G)$}

For convenience, let's firstly list some symbol assumptions:

\begin{enumerate}
\item $x^r,x^i$ is the real part and the imaginary part of $x = x^r + i x^i$ respectively for $x$ is a number, a vector or a matrix.
\item $A\cdot B$ means $\tr A^HB$ where $A,B$ are two matrices.
\item $A\succeq 0$ means that $A$ is a positive semidefinite matrix.
\item $\mathcal{R}_{n\times n}$ is the set of real $n\times n$ dimensional matrices.
\item $\mathcal{C}_{n\times n}$ is the set of complex $n\times n$ dimensional matrices.
\end{enumerate}

To prove that upper bound of $S$ in quantum case is the Lov\'asz number is equivalent to prove 
\begin{eqnarray}
\vartheta_c(G) = \vartheta(G).
\end{eqnarray}
It's obvious that  $\vartheta_c(G) \geq \vartheta(G)$, we also need to show that $\vartheta_c(G) \leq \vartheta(G)$ for any graph $G$.

It has been prove that  \cite{Lovasz79}
\begin{eqnarray}
\vartheta(G) &=& \max_X J\cdot X\nonumber\\
&\text{s.t.}&I \cdot X = 1,\nonumber\\
&&J_{ij} \cdot X = 0, \;\; (i, j) \in E,\nonumber\\
&&X \succeq 0,~X\in \mathcal{R}_{n\times n},
\end{eqnarray}
where $J$ is a matrix full of $1$ while $J_{ij}$ is the matrix whose $i$-rule $j$-column element is $1$ and the rest are $0$.

It can be proven that $\vartheta_c(G) \leq \vartheta'_c(G)$, where
\begin{eqnarray}\label{complexsdp}
\vartheta'_c(G) &=& \max_X J\cdot X = \max_X J\cdot X^r\nonumber\\
&\text{s.t.}&I \cdot X = 1,\nonumber\\
&&J_{ij} \cdot X = 0, \;\; (i, j) \in E,\nonumber\\
&&X \succeq 0,~X\in \mathcal{C}_{n\times n}.
\end{eqnarray}
The proof is as follows. If the set of complex unit vectors $|d\rangle$ and $|v_i\rangle$'s is an optimal solution, that is, $\vartheta_c(G) = \sum_{i=1}^n |\langle d|v_i\rangle|^2$. Then it's directly to see that the set of complex unit vectors $U|d\rangle$ and $U|v_i\rangle$'s is also optimal. Without loss of generality, we can always assume that $|d\rangle = (1,\ldots,1)/\sqrt{n}$. Denote $X$ as the matrix such that $x_{ij} = \langle v_i|v_j\rangle/n$. One can check that $X$ satisfies the conditions in Eq. \eqref{complexsdp}, and $J\cdot X = \sum_{i=1}^n|\langle d|v_i\rangle|^2 = \vartheta_c(G) \leq \vartheta'_c(G)$.

Define
\begin{eqnarray}\label{complexsdp2}
\vartheta''_c(G) &=& \max_X J\cdot X^r\nonumber\\
&\text{s.t.}&I \cdot X^r = 1,\nonumber\\
&&J_{ij} \cdot X^r = 0, \;\; (i, j) \in E,\nonumber\\
&&X^r \succeq 0,~X\in \mathcal{C}_{n\times n}.
\end{eqnarray}
Since $\max_X J\cdot X = \max_X J\cdot X^r\nonumber$, $I \cdot X = I \cdot X^r$, $J_{ij} \cdot X = 0$ implies $J_{ij} \cdot X^r = 0$ and $X^r \succeq 0$ is necessary for $X \succeq 0$, so $\vartheta'_c(G) \leq \vartheta''_c(G)$. On the other hand, it's directly to see that $\vartheta''_c(G) \leq \vartheta(G)$ because $X^r \succeq 0,~X^r\in \mathcal{R}_{n\times n}$ is necessary for $X^r \succeq 0,~X\in \mathcal{C}_{n\times n}$.

After all, $\vartheta_c(G) \leq \vartheta(G)$. Thus $\vartheta_c(G) = \vartheta(G)$, which implies the quantum upper bound of the noncontextual inequality equals to the original Lov\'asz number.

\emph{Remark 3.}--- For the general noncontextual inequality (\ref{inequ1}), whose weights maybe not are all $1$'s, then its exclusivity graph is the weighted graph. We can change the maximized object $J\cdot X$ to $W\cdot X$, where $w_{ij} = \sqrt{w_iw_j}$. And the rest of the proof is same.
On the other hand, we can give two procedures to construct a real optimal solution (the LOOR realized in the real Hilbert space) from a complex one, which shows $\vartheta_c(G) = \vartheta(G)$ from another point of views. One procedure is in the operator perspective while the other one is in the vector perspective.

\section{Projector construction procedure}

Assume the set of pure state $\rho = |\psi\rangle\langle \psi|$ and rank-$1$ projectors $P_i = |v_i\rangle\langle v_i|$'s is an optimal complex solution for the inequality $S = \sum_i w_i\langle P_i\rangle \le \alpha$, that is, $\vartheta = \sum_i w_i {\rm Tr}[P_i\cdot\rho]$. Then the first step to construct a real optimal solution is constructing real projector $Q_i$'s such that
\begin{eqnarray}
Q_i = \begin{bmatrix}P_i^r&&-P_i^i\\P_i^i&&P_i^r\end{bmatrix}.
\end{eqnarray}
It's easy to find that $Q_i$'s are rank-$2$ projectors and $Q_i\cdot Q_j=0$ if $P_i\cdot P_j=0$. What's more, we claim that the maximal eigenvalue of
$\sum_i w_i Q_i$
is the same as $\sum_i w_iP_i$, that is, the Lov\'asz number. Let's denote 

\begin{eqnarray}
\lambda I_n-\sum_i w_iP_i = A+iB,
\end{eqnarray}
 then 
 \begin{eqnarray}
 \lambda I_{2n} - \sum_i w_i Q_i = \begin{bmatrix}A&-B\\B&A\end{bmatrix}.
 \end{eqnarray}
  The claim is implied by the following lemma.
\begin{lemma}
$A,B$ are two $n\times n$ dimensional real matrice. Then $A+iB \succeq 0$  if and only if $\begin{bmatrix}A&-B\\B&A\end{bmatrix} \succeq 0$.
\end{lemma}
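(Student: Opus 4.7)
The plan is to recast both the complex condition $A+iB\succeq 0$ and the real condition $M:=\begin{bmatrix}A&-B\\B&A\end{bmatrix}\succeq 0$ in terms of the same scalar quadratic form, and then read the equivalence off in one line. As a warm-up, I would dispose of the Hermiticity/symmetry compatibility: $A+iB$ being Hermitian forces $A^{T}=A$ and $B^{T}=-B$, and these two identities are also exactly what makes $M$ symmetric. So in either direction of the lemma we may silently assume $A=A^{T}$ and $B^{T}=-B$, and only positivity of the two quadratic forms is really at stake.

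The heart of the proof is the identity
\begin{equation*}
(x+iy)^{H}(A+iB)(x+iy)=\begin{bmatrix}x\\ y\end{bmatrix}^{T}M\begin{bmatrix}x\\ y\end{bmatrix}\qquad\text{for all }x,y\in\mathbb{R}^{n}.
\end{equation*}
I would verify it by expanding both sides to $x^{T}Ax+y^{T}Ay+2y^{T}Bx$, using $A^{T}=A$ to kill the imaginary cross terms on the left and using $B^{T}=-B$ to combine $-x^{T}By$ with $y^{T}Bx$ on both sides. Once this identity is established, the lemma is immediate: every real vector in $\mathbb{R}^{2n}$ has the form $[x;y]$ with $x,y\in\mathbb{R}^{n}$, and every complex vector in $\mathbb{C}^{n}$ has the form $x+iy$, so nonnegativity of one quadratic form is equivalent to nonnegativity of the other.

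A cleaner alternative, which I would include as a remark, is to diagonalize $M$ via the unitary $U=\tfrac{1}{\sqrt{2}}\begin{bmatrix}I&iI\\ iI&I\end{bmatrix}$. A short block calculation yields
\begin{equation*}
U^{H}MU=\begin{bmatrix}A-iB&0\\ 0&A+iB\end{bmatrix},
\end{equation*}
and since $A-iB=\overline{A+iB}$ has the same real spectrum as $A+iB$, positivity of $M$ is equivalent to positivity of $A+iB$ in one stroke. There is no substantive obstacle in the argument; it is purely algebraic bookkeeping. The one place to stay alert is the sign produced by $B^{T}=-B$, which is what dictates the $-B$ in the off-diagonal block of $M$ and makes the two quadratic-form computations balance rather than cancel.
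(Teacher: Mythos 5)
Your proof is correct, and it takes a genuinely different---and in fact more watertight---route than the paper's. The paper conjugates $M=\begin{bmatrix}A&-B\\B&A\end{bmatrix}$ by $\mathcal{V}=\begin{bmatrix}1&i\\-i&1\end{bmatrix}\otimes I_n$, computes $\mathcal{V}M\mathcal{V}=2\begin{bmatrix}1&i\\-i&1\end{bmatrix}\otimes(A+iB)$, and asserts that $\mathcal{V}$ is positive \emph{definite} in order to transfer positivity in both directions; but $\det\begin{bmatrix}1&i\\-i&1\end{bmatrix}=1-i(-i)=0$, so $\mathcal{V}$ is singular and the congruence cleanly yields only $M\succeq 0\Rightarrow A+iB\succeq 0$. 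Your main argument---first extracting $A^{T}=A$, $B^{T}=-B$ from Hermiticity, then the scalar identity $(x+iy)^{H}(A+iB)(x+iy)=\begin{bmatrix}x\\y\end{bmatrix}^{T}M\begin{bmatrix}x\\y\end{bmatrix}=x^{T}Ax+y^{T}Ay+2y^{T}Bx$---is elementary, treats both directions symmetrically, and is complete as stated. Your remark with the genuinely unitary $U=\tfrac{1}{\sqrt2}\begin{bmatrix}I&iI\\iI&I\end{bmatrix}$ is essentially the ``repaired'' version of the paper's congruence: because $U$ is invertible, it produces the full block diagonalization $U^{H}MU=\mathrm{diag}(A-iB,\,A+iB)$ rather than a rank-deficient collapse onto a single block, and the two-way equivalence follows honestly since $A-iB=\overline{A+iB}$ shares its spectrum. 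Either of your two arguments would serve as a sound replacement for the proof in the paper.
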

\begin{proof}
%Denote $U = (\begin{bmatrix}1&-i\\i&-1\end{bmatrix}/\sqrt{2})\otimes I_n$, then $\det U=(-\sqrt{2})^n$ and $U^2=UU^\dagger=I_{2n}$.
%\[
%U \begin{bmatrix}A&B\\-B&A\end{bmatrix} U^\dagger = 2\begin{bmatrix}A+iB&0\\0&A-iB\end{bmatrix}.
%\]
Denote $\mathcal{V} = \begin{bmatrix}1&i\\-i&1\end{bmatrix}\otimes I_n$, then $\mathcal{V}$ is a positive definite Hermite matrix. So,
\begin{eqnarray}
\begin{bmatrix}A&-B\\B&A\end{bmatrix} \succeq 0 &\Leftrightarrow& \mathcal{V} \begin{bmatrix}A&-B\\B&A\end{bmatrix} \mathcal{V}= 2\mathcal{V}\otimes (A+iB)\succeq 0 \nonumber\\
&\Leftrightarrow& A+iB \succeq 0.
\end{eqnarray}
\end{proof}

Till now, we have constructed the real rank-$2$ projectors from the complex ones. We continue to construct the real rank-$1$ projectors based on the rank-$2$ ones in the next step.

Let's denote 
\begin{eqnarray}
\tilde{\rho} = \frac{1}{2}\begin{bmatrix}\rho^r&-\rho^i\\\rho^i&\rho^r\end{bmatrix},
\end{eqnarray}
 directly calculation shows that
\begin{eqnarray}
\tilde{\rho}\cdot\sum_i w_i Q_i = \rho\cdot\sum_i w_i P_i = \vartheta,
\end{eqnarray}
which means the rank-$2$ mixed state $\tilde{\rho}$ is optimal. If we choose a rank-$1$ decomposition of $\tilde{\rho}=\tilde{\rho}^1+\tilde{\rho}^2$, then $\tilde{\rho}^1, \tilde{\rho}^2$ are all optimal. And we can always decompose $Q_i$ into rank-$1$ projectors $Q_i^1, Q_i^2$ such that $\tilde{\rho}_1\cdot Q_i^2=0$. Then
\begin{eqnarray}
\tilde{\rho}^1\cdot \sum_i w_i Q_i^1 = \tilde{\rho}^1\cdot \sum_i w_i Q_i = \vartheta,
\end{eqnarray}
and $Q_i^1\cdot Q_j^1=0$ if $Q_i\cdot Q_j=0$. It's easy to find that $Q_i^1 = Q_i\tilde{\rho}^1Q_i/(\tilde{\rho}^1\cdot Q_i), Q_i^2 = Q_i - Q_i^1$ is such a construction.

After all, the set of $\tilde{\rho}^1$ and $Q_i^1$'s is the real rank-$1$ optimal solution for the quantum case.

\section{Vector construction procedure}

We go on to give the vector construction procedure, which is essentially equivalent to the projector one.

Assume the set of $d$ dimensional unit vectors $|\psi\rangle$ and $|v_i\rangle$'s is an optimal compelx solution, that is, $\vartheta = \sum_i w_i |\langle \psi|v_i\rangle|^2$. Then the construction procedure is as following.

The first step is to find a basis $|1\rangle, |2\rangle, \ldots, |d\rangle$ such that $|\psi\rangle = |1\rangle$.
The second step is to construct $|u_i\rangle$ as $|u_i\rangle = e^{-i \alpha_i} |v_i\rangle$, where $\alpha_i$ is such an angle that $\langle \psi|v_i\rangle = e^{i\alpha_i} |\langle \psi|v_i\rangle|$. Since $\langle \psi|u_i\rangle = \langle \psi|u_i^r\rangle + i\langle \psi|u_i^i\rangle \geq 0$, $\langle \psi|u_i^i\rangle$ will always be $0$.

The third step is mapping any $|v\rangle = \sum_{i=1}^d c_i |i\rangle$ to the $2d$-dimensional vector 
\begin{eqnarray}
\mathcal{M}(|v\rangle) = \sum_{i=1}^d [c_i^r|i\rangle|1\rangle + c_i^i|i\rangle|2\rangle].
 \end{eqnarray}
 
 Denote $|\phi\rangle = \mathcal{M}(|\psi\rangle)$, $|\omega_i\rangle = \mathcal{M}(|v_i\rangle)$ for $i\in V$, and $S^r = \sum_i w_i (\langle \phi|\omega_i\rangle)^2$, then we can check that
\begin{eqnarray}
S^r &=& \sum_i w_i(\langle \psi|u^r_i\rangle)^2
= \sum_i w_i(\langle \psi|u_i\rangle)^2\nonumber\\
&=& \sum_i w_i|\langle \psi| e^{-i \alpha_i} |v_i\rangle|^2\nonumber\\
&=&  \sum_i w_i|\langle \psi|v_i\rangle|^2 = \vartheta.
\end{eqnarray}
And the exclusivity relations
\begin{eqnarray}
\langle \omega_i|\omega_j\rangle = \langle u^r_i|u^r_j\rangle + \langle u^i_i|u^i_j\rangle = (\langle u_i|u_j\rangle)^r = 0.
\end{eqnarray}
Denote $|\varphi\rangle = |\psi\rangle|2\rangle$, then $\langle \phi|\varphi\rangle = 0$ and
\begin{eqnarray}
\langle \varphi|\omega_i\rangle = \langle \psi|u^i_i\rangle = 0,
\end{eqnarray}
which means $|\omega_i\rangle$'s and $|\psi\rangle$ are in a $(2d-1)$-dimensional subspace.
Thus, we can reduce the $2d$-dimensional unit vectors $|\phi\rangle$ and $|\omega_i\rangle$'s to $(2d-1)$-dimensional ones without changing the exclusivity relations and $S^r = \vartheta(G)$.

Till now, we have completed the vector procedure and the real unit vectors for the BBC-21-Ray as shown in Eq. (\ref{21real}) have been successfully constructed by these two procedures.

\section{Conclusion}

In conclusion, we have proved in detail that the Lov\'asz-optimum orthogonal representation for any exclusivity graph can be realized in the real Hilbert space of suitable dimension. Explicitly, if there is a Lov\'asz-optimum orthogonal representation realized in the $d$-dimensional complex Hilbert space, then there always exists a corresponding Lov\'asz-optimum orthogonal representation in the $(2d-1)$-dimensional real Hilbert space. Very recently, a general set of SIC has been developed for a single qutrit \cite{XCS}, in which there are $(3+3k+k^2)$ complex rays that involve the BBC-21-Ray as a special case. The real as well as the complex Lov\'asz-optimum orthogonal representation could be obtained accordingly, which we shall investigate subsequently.

\begin{acknowledgments}
J.L.C. is supported by the National Basic Research Program (973 Program) of China under Grant No. 2012CB921900 and the NSF of China (Grant Nos. 11175089 and 11475089). This work is also partly supported by the National Research Foundation and the Ministry of Education, Singapore.
\end{acknowledgments}

\end{document}